\title{Multi-server Blind Quantum Computation Protocol with Limited Classical Communication among Servers}
\author[1]{Yuichi Sano\thanks{\texttt{sano.yuichi.77v@st.kyoto-u.ac.jp}}} 
\affil[1]{Department of Nuclear Engineering, Kyoto University, Nishikyo-ku, Kyoto 615-8540, Japan} 
\date{\today}
\theoremstyle{definition}
\newtheorem{thm}{Theorem}
\newtheorem{defn}[thm]{Definition}
\newtheorem{lem}[thm]{Lemma}
\begin{document}

\maketitle

\begin{abstract}
A user who does not have a quantum computer but wants to perform quantum computations may delegate his computation to a quantum cloud server.
For users securely using the service, it must be assured that no malicious server can access any vital information about the computation.
The blind protocol was proposed as a mechanism for users to secure their information from unauthorized actions of the server.
Among the blind protocols proposed thus far, a protocol with two servers sharing entanglement does not require any quantum resource from the user but does not allow the servers to interact even after the computation.
We propose a protocol in this paper that extends this two-server protocol to multiple servers and is secure even if some servers communicate with each other after the computation.
Dummy gates and a circuit modeled after brickwork states play a crucial role in the new protocol.
We also show that it is possible to make estimates for risks that could not be estimated in a previous study.
\end{abstract}

\section{Introduction}
Quantum computers have been actively studied with the expectation of providing higher computational capacity than classical computers.
For example, Shor's algorithm uses the quantum Fourier transform to solve prime factorization and discrete logarithm problems exponentially faster than existing conventional algorithms.
Grover's algorithm is the fastest searching algorithm for an unordered database, at a speed that is thought to be impossible to achieve with classical computation\cite{Grover}.
In addition to specific algorithms, it is known that classical computers cannot sample as fast as quantum computers in the sampling problem\cite{sampling,Fine-Grained}.
While quantum computers have such superiority, they will be enormously expensive compared with classical ones even if they will become available in the future as they will need some fine-tuned microscopic devices to use quantum effects..

When quantum computers are used as cloud servers, user security is a concern.
The user needs to send information about his calculations to the server to delegate the calculations.
If the server is malicious, it may illegally obtain the user's information.
Therefore, the user must do a blind quantum computation\cite{Childs,BFK,MF protocol,Morimae hayashi,Reichardt,McKague,Sano}.
A blind quantum computation protocol securely encrypts inputs, outputs, and calculation process of the calculations delegated by the user to the server.
Blind quantum computation is expected to be an advantage of the new quantum computation because it is more powerful than fully homomorphic encryption, which is its classical analog\cite{FHE}.

At the moment, a blind quantum computation cannot be performed unconditionally; therefore, servers and users must be subjected to constraints.
A commonly studied blind protocol is with the user and only a single server\cite{Childs,BFK,MF protocol,Morimae hayashi,Sano}.
With the protocol of using the single server, the user does not have to impose any restrictions on the server, but the user's abilities must be higher than the abilities of classical communication and classical computation.
However, in the protocol of using multiple servers, the user only needs to have the abilities of classical computation and classical communication to delegate his calculation\cite{Reichardt,McKague}.
Classical computation and communication abilities are now widely available, and they are thus considered to be essential abilities for users.
Therefore, these protocols using multiple servers are very convenient for users.
However, very strong restrictions are imposed on the server: servers cannot do classical communication with each other.
In reality, we cannot assume that servers do not have the ability to perform classical communication with each other, so we can assume that servers do not perform classical communication with each other according to the contract with the user.
There is no problem if the server honors the contract; however, we need a blind protocol in case the server is malicious in the first place.
In addition, servers that initially honor their contracts with users may suddenly break those contracts.
The current protocol does not allow users to estimate the extent of those risks.
The purpose of our study is to relax the restrictions imposed on servers and to provide a way for users to estimate security against malicious servers.

In this study, we propose a blind protocol using multiple servers, in which some servers can classically communicate after the computation.
First, we extend the protocol from two servers case discussed in previous studies.
Next, we assume the situation where some servers can do classical communication after a calculation.
We then propose a method of encrypting the circuits used in the calculation so that the user can delegate the calculation by the blind computation even if some servers are in classical communication after the calculation.
Moreover, we show that if the user delegates his calculation to sufficiently many servers, the risk of their knowledge about the user's computation can be estimated.

\section{Preliminaries}
In this section, we describe a blind protocol and explain the blind protocol with two servers, which is the basis of the protocol proposed in our study.

\subsection{Blind Quantum Computation Protocol}
In this subsection, we first describe a blind protocol.
The blind protocol, first proposed by Childs\cite{Childs}, is a security protocol that hides a user's input/output and calculation process from a server.
To be more specific, when a user uses blind protocol to delegate his computation, the server has no way of knowing whether the multiple delegated calculations are the same or even different.
A classical analog is a fully homomorphic encryption, where the input and output are encrypted while the calculation is performed\cite{FHE}.
The fully homomorphic encryption server delegated the calculation has no knowledge of the input/output, but it knows about the operation being performed.
Therefore, the blind protocol is superior to the fully homomorphic encryption in terms of security because it can also hide the calculation process.
There are currently no known blind protocols in classical computation using classical computer servers.
For this reason, blind protocols are a unique advantage of quantum computation.

Broadbent, Fitzsimons, and Kashefi give the following definition of the blind protocol\cite{BFK}.
\begin{defn}[Blindness{\cite[Definition 2]{BFK}}]
\label{defn:Blindness}
Let P be a quantum delegated computation on input X and let L(X) be any function
of the input. We say that a quantum delegated computation protocol is blind while leaking at most
L(X) if, on user's input X, for any fixed Y = L(X), the following two conditions when given Y:
\begin{itemize}
 \item[1.] The distribution of the classical information obtained by server in P is independent of X.
 \item[2.]Given the distribution of classical information described in 1, the state of the quantum system
obtained by server in P is fixed and independent of X.
\end{itemize}
\end{defn}

The blind protocol is a computational protocol that satisfies definition \ref{defn:Blindness}.
This definition allows the server to know calculation-independent information such as the size of gates and information about protocol instructions but does not allow it to obtain calculation-dependent information that can distinguish between any calculations.

\subsection{Two Server Blind Protocol}
The first proposed blind protocol is performed by a user with quantum memory and a single quantum server\cite{Childs}.
Since then, blind protocols have been proposed, but blind protocols performed between a single server and a user require the user to have quantum abilities\cite{Childs,BFK,MF protocol,Morimae hayashi,Sano}.
Currently, there is no known blind protocol that can be performed by the user having only abilities of classical computation and classical communication with a single server\cite{Fitzsimons review}.
It is also not known about the possible or impossible existence of the blind protocol with a single server with the user who has only classical computation and classical communication abilities, but some negative results have been obtained\cite{morimae hitei,aronson hitei}.

In contrast, blind protocols with two servers are available for users who can only perform classical computation and classical communication\cite{Reichardt,McKague}.
These blind protocols are executed by a user making individual classical communication with multiple servers that share entanglement states.
However, classical and quantum communication among servers is prohibited during and after the computation.
The ability to perform classical computation and classical communication is so common that, for example, a single desktop computer is sufficient, which is very convenient for users.

In this paper, we propose a protocol based on the protocol\cite{Reichardt} that allows blind computation even after some servers have performed classical computation.
For this purpose, we first describe the protocol in \cite{Reichardt}.
\begin{thm}[${\rm MIP^*=QMIP}$\cite{Reichardt}]
\label{QMIP}
\begin{equation*}
    {\rm MIP^*=QMIP}.
\end{equation*}
\end{thm}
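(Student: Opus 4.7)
The plan is to prove the two inclusions separately. The easy direction $\mathrm{MIP}^* \subseteq \mathrm{QMIP}$ is immediate: a $\mathrm{QMIP}$ verifier can simply refuse to use its quantum channel and treat all incoming/outgoing messages as classical, thereby simulating any $\mathrm{MIP}^*$ protocol verbatim. So the content is in $\mathrm{QMIP} \subseteq \mathrm{MIP}^*$, where we must force a purely classical verifier to orchestrate what is morally a quantum interaction with entangled provers.

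My strategy for the hard direction is self-testing (rigidity) of nonlocal games. First I would isolate a small family of two-prover games — CHSH, the Magic Square game, or a parallel repetition thereof — for which near-optimal quantum strategies are known to be unique up to local isometry. That is, any strategy winning with probability at least $\omega^* - \varepsilon$ must be $\mathrm{poly}(\varepsilon)$-close (after applying local isometries on the provers' Hilbert spaces) to the canonical strategy in which the provers share EPR pairs and measure fixed Pauli observables. The first step of my proof would therefore be to assemble these rigidity statements into a "toolbox" of certified operations: a classical verifier can, by playing such a game, command two entangled provers to each perform a specific single-qubit Pauli measurement on their share of an EPR pair, and be statistically confident they did so.

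Next I would leverage this toolbox to simulate a $\mathrm{QMIP}$ protocol. A quantum verifier in a $\mathrm{QMIP}$ protocol may be assumed, without loss of generality, to interact through teleportation: the quantum messages between verifier and provers can all be implemented by preparing EPR pairs, measuring in the Bell basis, and classically communicating the correction bits. Thus I would replace the quantum verifier of a $\mathrm{QMIP}$ protocol by a classical orchestrator working with a larger pool of provers; certain provers are designated to supply the "verifier-side" halves of EPR pairs and perform certified Pauli measurements under the classical verifier's instructions, while others play the role of the original $\mathrm{QMIP}$ provers. Interleaving rigidity games with "computation rounds" — and keeping the provers unable to distinguish which type of round they are in — forces the computation rounds to be executed honestly on pain of detection in the test rounds.

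The main obstacle, and where the bulk of the technical work lies, is soundness under composition. Rigidity guarantees closeness to the ideal strategy only up to an error that degrades each time we chain a certified operation; moreover, the provers have full freedom to correlate their cheating across rounds, so one must control the accumulated error through a careful hybrid argument, together with parallel repetition or sequential amplification to drive the rigidity parameter $\varepsilon$ down polynomially in the protocol length. A secondary difficulty is ensuring that the classical verifier's round structure does not leak which rounds are tests, which requires a randomization schedule that is indistinguishable to the provers. Completing these soundness and indistinguishability analyses — and verifying that the resulting $\mathrm{MIP}^*$ protocol runs in polynomial rounds with polynomial-size questions — closes the inclusion and yields $\mathrm{QMIP} \subseteq \mathrm{MIP}^*$, hence equality.
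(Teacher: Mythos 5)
The paper does not prove this statement at all: Theorem~\ref{QMIP} is imported verbatim from \cite{Reichardt}, and the author explicitly declines to describe the underlying protocol, so there is no in-paper proof to compare yours against. Measured against the actual argument in the cited reference, your outline identifies the correct strategy --- and essentially the only known one at the time: the easy inclusion $\mathrm{MIP}^*\subseteq\mathrm{QMIP}$ by restriction of the verifier, and the hard inclusion via rigidity (self-testing) of the CHSH game, which lets a classical verifier certify that two entangled provers hold EPR pairs and apply specified Pauli operations, after which the quantum verifier of a $\mathrm{QMIP}$ protocol is simulated by teleportation through certified resource states, with test rounds and computation rounds interleaved so as to be indistinguishable to the provers. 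This is precisely the ``classical leash via rigidity of CHSH games'' construction of Reichardt, Unger and Vazirani.

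That said, as a proof your proposal has a genuine gap, and you have located it yourself without closing it: everything rests on a quantitative rigidity theorem for \emph{many sequential/parallel CHSH games} --- that a strategy winning $n$ games at rate $\omega^*-\varepsilon$ is $\mathrm{poly}(n,\varepsilon)$-close to $n$ ideal EPR-pair strategies under a single pair of local isometries --- together with a composition argument controlling how the closeness error accumulates when certified operations are chained across polynomially many rounds. These are the central technical results of \cite{Reichardt} (their sequential CHSH rigidity theorem and the ensuing tomography/process-tomography lemmas), not routine bookkeeping; asserting that ``a careful hybrid argument'' handles them is naming the difficulty rather than resolving it. A standalone proof would also need to verify that the provers' freedom to adapt their strategy based on the history of questions does not defeat the rigidity guarantee, which is exactly why single-game self-testing does not immediately compose. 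Since the paper itself treats the theorem as a black box, your sketch is an adequate account of \emph{why} the theorem is true, but it is not a proof.
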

${\rm MIP^*}$ is a set of problems that can be verified via classical communication between a user with the ability of classical computation and servers who can unbounded computation and share entanglement among servers.
QMIP is a set of problems that can be verified via quantum communication between a user who has the ability of quantum computation and servers who have the ability of infinite computation.
With theorem\ref{QMIP}, the following lemma is also known for the case where the server(verifier) is limited.
\begin{lem}[Two Server Blind Quantum Computation Protocol\cite{Reichardt}]
\label{QMIP(0)}
\begin{equation*}
    {\rm MIP^*[2\ servers]\geq QMIP[0\ server]=BQP}.
\end{equation*}
\end{lem}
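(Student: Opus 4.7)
The plan is to establish the two pieces of the statement separately. The equality $\mathrm{QMIP}[0\text{ server}] = \mathrm{BQP}$ is immediate from the definitions, since a QMIP protocol with no server is just a polynomial-time quantum verifier deciding the language on its own, which is precisely the definition of $\mathrm{BQP}$.

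For the inclusion $\mathrm{MIP}^*[2\text{ servers}] \geq \mathrm{BQP}$, I would design a classical-verifier, two-entangled-server protocol that decides an arbitrary BQP language. First I would cast the BQP circuit in the measurement-based model on a resource graph state such as a brickwork state, so that the computation reduces to (i) preparing a fixed entangled resource state and (ii) performing adaptively chosen single-qubit measurements in a small, finite set of bases. One server is asked to play the role of preparer and measurer of the resource state, while the second server holds the other halves of EPR pairs that will be used to certify honest behavior.

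The certification step is the heart of the argument, and I would lean on the rigidity (self-testing) of the CHSH game, or a closely related nonlocal game. The classical verifier randomly interleaves CHSH test rounds with computation rounds, never revealing to the servers which is which. Any strategy that wins the test rounds with probability close to the quantum optimum must, up to a local isometry on each server, consist of measurements of Pauli observables on shared EPR pairs; this pins down what the servers are doing in the computation rounds as well, so that the verifier can read the BQP output off the classical transcript with high confidence. Combined with Theorem \ref{QMIP}, which already tells us that general $\mathrm{MIP}^* = \mathrm{QMIP} \supseteq \mathrm{BQP}$, the remaining content of the lemma is really that the server count can be pushed down to two.

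The main obstacle I expect is quantitative robustness. I would need a self-testing bound strong enough that servers whose test-round winning probability is within $\varepsilon$ of the quantum optimum produce computation-round outputs within $\mathrm{poly}(\varepsilon)$ statistical distance of the honest MBQC output, and this bound has to survive polynomially many adaptive measurement rounds while still leaving an inverse-polynomial completeness-soundness gap. Carefully choosing the test/computation mixing ratio, and handling the adaptivity of MBQC measurement angles without leaking information to the servers, is the core technical difficulty.
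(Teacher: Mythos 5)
The paper does not actually prove this lemma: it is imported verbatim from Reichardt--Unger--Vazirani, and the author explicitly declines to describe the protocol beyond the remark that ``the user monitors the server's behavior using the entanglement of the two servers.'' So there is no in-paper proof to match against; what can be compared is your sketch versus the cited construction and the way the present paper later uses it. Your two observations are sound: $\mathrm{QMIP}[0\ \mathrm{server}]=\mathrm{BQP}$ is definitional, and the real content is a two-prover classical-verifier delegation scheme built on CHSH rigidity. That rigidity-based certification, with test rounds interleaved indistinguishably among computation rounds and a robust self-testing bound that must survive polynomially many rounds, is indeed the heart of the RUV argument, and your identification of quantitative robustness as the main technical obstacle is exactly where the difficulty lies in the original paper.

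Where you genuinely diverge is the computational model. You cast the computation as MBQC on a brickwork state with one server preparing and measuring the resource; RUV (and, importantly, the present paper's own description and its later constructions) use a circuit-model, gate-teleportation structure in which one server applies gates from the universal set $\{CNOT, G\}$ to qubits that are bounced back and forth with the second server. Your MBQC route is closer in spirit to the McKague-style self-tested graph-state protocols, which is a legitimate alternative proof of the same containment, but it would not directly support the rest of this paper: the main protocol here splits the gate-applying server into $N$ pieces and encrypts a gate sequence, which presupposes the circuit-model structure. One further caution: the lemma as stated is only a complexity containment, but the paper leans on the cited protocol also being \emph{blind}; your sketch gestures at hiding the adaptive measurement angles but does not argue blindness in the sense of Definition \ref{defn:Blindness}, and that property does not follow automatically from verifiability.
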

Lemma \ref{QMIP(0)} shows that a user can achieve the same results as a quantum computer by using classical communication with two servers that share entanglements.
This means that a user can do quantum computation without sending any information to the servers.
We do not go into detail about the protocol, but it's based on the user monitoring the server's activity utilizing the two servers' entanglement.
Servers do not communicate with each other by assumption because this protocol is built on the interactive proof system of computational complexity theory.
If users try to implement this protocol in the real world, they will have to sign a contract prohibiting the server's classical communication.
The limitations placed on this server are extremely stringent.
In the following section, we propose a partial relaxation of this restriction.

\section{Multi-server Blind Quantum Computation Protocol with Limited Classical Communication among Servers}
In this section, we propose a protocol that allows the user's computation to remain blindness even if some servers perform classical communication with other servers after the computation that the user delegates for servers.
To do that, we first show that it is possible to perform the protocol on several servers, as extended from the two servers' protocol of the previous study\cite{Reichardt}.
Next, we show how to encrypt the circuit so that even if the server gets some information about the circuit, it cannot know anything about the calculation that the user delegated to it.

\subsection{Extension to Multiple Servers}
In principle, the protocol from the preceding study is performed by two servers and a user using classical communication.
When the number of servers is increased from two to many, each server has less knowledge about the calculation.
One trivial way to increase the number of servers is to add virtual servers that do not participate in the calculation, but we will not consider this.
We suggest a non-trivial method for increasing the number of servers by internally separating each of the protocol's two servers from the prior study.
\begin{thm}
\label{multi}
Even though the internal roles of the two servers in the lemma \ref{QMIP(0)} protocol are split into multiple servers each, the protocol is still a blind protocol.
\end{thm}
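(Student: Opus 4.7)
The plan is to reduce blindness of the split protocol to blindness of the original two-server protocol from Lemma~\ref{QMIP(0)} by a data-processing style argument. First I would make the splitting precise: partition the workload of the original server~1 (the classical messages received from the user, its share of the entangled resource, and its local quantum operations) among sub-servers $S_{1,1},\ldots,S_{1,k_1}$, and analogously partition server~2's workload among $S_{2,1},\ldots,S_{2,k_2}$. The partition is chosen so that the concatenation of the views of $S_{1,1},\ldots,S_{1,k_1}$ is exactly the view of the original server~1, and similarly on the other side; in particular each sub-server's classical and quantum view is a deterministic projection of the corresponding original server's view. As in Lemma~\ref{QMIP(0)}, the sub-servers are not permitted to communicate with each other during the protocol.

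Given such a partition, I would verify both clauses of Definition~\ref{defn:Blindness} for every sub-server individually. For clause~1, the marginal distribution of classical messages received by a sub-server $S_{i,j}$ is, by construction, a marginal of the joint classical distribution seen by the original server~$i$; since the latter is independent of $X$ given $L(X)$ by blindness of the protocol in Lemma~\ref{QMIP(0)}, so is the former. For clause~2, the post-protocol quantum state held by $S_{i,j}$ is obtained from the quantum state held by server~$i$ by the partial trace onto $S_{i,j}$'s subsystems; since partial trace is a fixed, $X$-independent CPTP map and server~$i$'s state is fixed conditional on its classical record, the state on $S_{i,j}$ inherits the same independence from $X$.

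The main obstacle I anticipate is ensuring that a workload partition of the required form actually exists, i.e.\ that server~$i$'s instructions genuinely decompose into pieces which can be executed locally and in parallel by independent sub-servers holding disjoint qubits, without any intra-side communication. This requires a structural look at the protocol underlying Lemma~\ref{QMIP(0)}: the user's messages to each server consist of measurement-basis choices and local Clifford-type updates acting on separate qubits of the shared entangled resource, so a partition along qubit registers does exist, and the data-processing argument above then goes through. A minor bookkeeping point is that the user must now route each instruction to the correct sub-server, but this is merely a relabeling of destinations and does not alter the joint distribution of views on either side; hence blindness is preserved.
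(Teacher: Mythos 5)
Your proposal is correct and rests on the same key observation as the paper's proof: each sub-server's view is a function (a marginal/projection) of the corresponding original server's view, so blindness of the two-server protocol of Lemma~\ref{QMIP(0)} transfers to the split protocol. The paper packages this as a proof by contradiction (the original servers could simulate the split servers and hence break blindness if the split protocol were not blind), whereas you argue the contrapositive directly by checking both clauses of Definition~\ref{defn:Blindness}; the content is the same, and your explicit attention to whether the workload partition actually exists is a point the paper leaves implicit.
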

\begin{proof}
We refer to the two servers used by lemma \ref{QMIP(0)} as server A and server B, respectively.
We will split these servers into several groups.
The set of servers that split server A is $\{A_a\}_a$, and the set of servers that split server B is $\{B_b\}_b$, where $a$ and $b$ are the number of server A and server B splits.
Let $\{m^A_a\}_a$ and $\{m^B_b\}_b$ be the set of messages that a user sends to each server.

We use proof by contradiction.
We assume that the server can break blindness and obtain information about the user's calculation using this server-splitting protocol.
By assumption, the server is getting information from the messages $\{m^A_a\}_a$ and $\{m^B_b\}_b$ with the user that would break blindness.
Let $m_A$ and $m_B$ be the sets of messages received by the original server A and server B.
The server split is just a split of the internal workings of the original server, so $\{m^A_a\}_a$ can be created from $m_A$.
The $\{m^B_b\}_b$ can be created in the same way.
Thus, the original server A and server B can easily simulate the server's behavior after the split.
Therefore, the original server A and server B can also obtain information that breaks the blindness.
However, this contradicts the fact that the protocol consisting of server A and server B is a blind protocol.
Hence, the assumption is wrong, i.e., the protocol will remain a blind protocol even if the server is split such that the internal roles of server A and server B are split.
\end{proof}
It turns out that the lemma \ref{QMIP(0)} can be solved using two or more multiple servers. 
As the number of servers used for computation grows, the circuits that use the computation become more fragmented, and each server knows less about the user's calculation.
In reality, the amount of information each server knows is irrelevant if the servers do not use classical communication.
However, each server needs to have a small amount of information if the servers perform classical communication with each other after the computation.
If some servers are allowed to communicate with other servers after the calculation, two servers can quickly learn the entire circuit if the protocol of the previous study.
But if the number of servers participating in the calculation becomes huge, it becomes difficult to know everything completely.
Of course, part of the circuit depends on the calculation, so the protocol is no longer a blind protocol under such an assumption.
In the next subsection, we will encrypt the circuit so that servers can get some information about the circuit, but not the information that depends on the calculation of the user.

\subsection{Main Protocol}
We propose a blind protocol in this subsection, even if some servers communicate after the computation and transfer information about the computation to one another.
The protocol used in the prior study, the protocol comprises two servers: one server that runs quantum gates on qubits and another that receives those qubits once and returns them to the first server.
Thus, the entire quantum circuit for a calculation is realized by one of the two servers.
From the information in the quantum circuit, the server knows the input and output of the calculation and the calculation process.
Therefore, when the server is split up, if the server on the side running the quantum circuit shares information, information about the user's calculations will be leaked.
Hence, we propose a method to encrypt the circuit so that even if the server knows some information about the circuit, the information does not depend on the user's calculation.

In the previous study, $\{CNOT,G\}$ is used as the universal gate set\cite{universal}.
$G$ gate is a gate that looks like
\begin{equation}
 G = R_y(\frac{-\pi}{8}).
\end{equation}
We adopt $\{H,T,CZ\}$ as a universal gate set for simplicity\cite{Nielsen-Chuang}.
These universal gate sets can approximate each other with polynomials, so the difference is not essentially significant\cite{Nielsen-Chuang}.

First, we define {\sl a circuit like brickwork states}\cite{BFK} to perform calculations with the same structure.
\begin{defn}[Circuit Like Brickwork States]
A circuit like brickwork states consists of fixed number $n$ of gates vertically and fixed number $p(n)$, is a polynomial of $n$, of gates horizontally.
The circuit is composed as follows.
We can always make $n$ and $p(n)$ an even number by adding ancilla gates.
The circuit is created in the following steps.
\begin{description}
    \item[Step 1.] Each row starts with arbitrary unitary operator $V$ which consists  of $m$ gates.
    \item[Step 2.] Then, a unitary operator $U_1$ is put on the $l$-th row, where $l$ is $\{l=2k+1|k=0,1\ldots,\frac{n}{2}-1\}$, and a unitary operator $U_2$ is put on $l+1$ th row.
    \item[Step 3.] The user puts the $CZ$ gate between the $l$-th row and the $l+1$ th row.
    \item[Step 4.] A unitary operator $U_3$ is put on the $l$-th row, and a unitary operator $U_4$ is put on $l+1$ th row.
    \item[Step 5.] The user performs arbitrary unitary operator $V$ which consists of $m$ gates on each row.
    \item[Step 6.] The unitary operator $U_2$ is put on the $l$-th row, and the unitary operator $U_1$ consisting of four gates is put on $l-1$ th row.
    The user does not execute anything in the first and last row between steps 6 to 8.
    \item[Step 7.] The user puts the $CZ$ gate between the $l$-th row and the $l-1$ th row.
    \item[Step 8.] The unitary operator $U_4$ is put on the $l$-th row, and the unitary operator $U_3$ is put on $l-1$ th row.
    \item[Step 9.] Repeat steps 1 to 8 until the last column is reached.
\end{description}
The unitary operators $\{U_1,U_2,U_3,U_4\}$ that exist in front of each $CZ$ gate can be changed to the identity gate or the $CNOT$ gate by changing it as shown in Figs.\ref{fig:make identity}--\ref{fig:make CNOT}.
If $U_1=I,U_2=I,U_3=I,U_4=I$, the two $CZ$ gates, and the four unitary operators act as the identity gate.
If $U_1=R_z(\frac{\pi}{2}),U_2=R_x(\frac{\pi}{2}),U_3=I,U_4=R_x(\frac{-\pi}{2})$, the two $CZ$ gates and the four unitary operators act as the $CNOT$ gate.
Each unitary operator can be made with a combination of identity gate, $T$ gate, $T^{\dagger}$, and $H$ gate as shown below:
\begin{equation}
    I = H\cdot I \cdot H \cdot I,
\end{equation}
\begin{equation}
    R_z(\frac{\pi}{2}) = H\cdot I \cdot H \cdot T^2,
\end{equation}
\begin{equation}
    R_x(\frac{\pi}{2}) = H\cdot T^2 \cdot H \cdot I,
\end{equation}
\begin{equation}
    R_x(\frac{-\pi}{2}) = H\cdot T^{\dagger 2} \cdot H \cdot I.
\end{equation}

And we assume that the number $m$ of gates constituting an arbitrary unitary operator $V$ between $CZ$ gates is fixed.
The reason is that to be a blind protocol, $m$ needs to be fixed for any circuit, should not be determined for each circuit.
\end{defn}

\begin{figure}[t]
\centering
\includegraphics[width=8cm]{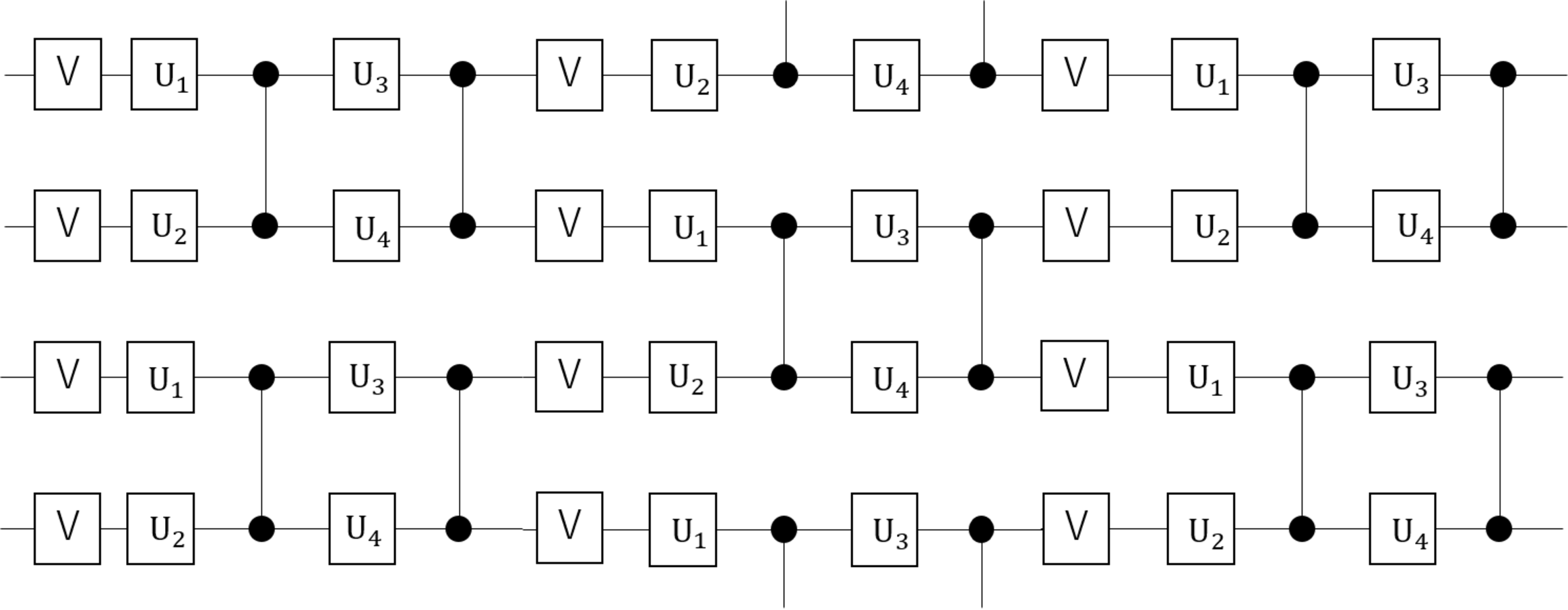}
\caption{Circuit like brickwork states.}
\label{fig:BFK}
\end{figure}

\begin{figure}[t]
\begin{equation*}
     \Qcircuit @C=1em @R=.7em {
   &  \qw & \gate{I}  & \ctrl{1} & \gate{I}  & \ctrl{1} & \qw \\
   &  \qw & \gate{I}  & \ctrl{0} & \gate{I}  & \ctrl{0}  & \qw 
    }
\end{equation*}
\caption{Combination of the $CZ$ gates and single qubit gates acting as the identity gate.}
\label{fig:make identity}
\end{figure}
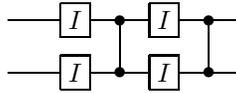

\begin{figure}[t]
\centering
\begin{equation*}
     \Qcircuit @C=1em @R=.7em {
   &  \qw & \gate{R_z(\frac{\pi}{2})}  & \ctrl{1} & \gate{I}  & \ctrl{1} & \qw \\
   &  \qw & \gate{R_x(\frac{\pi}{2})}  & \ctrl{0} & \gate{R_x(\frac{-\pi}{2})}  & \ctrl{0}  & \qw 
    }
\end{equation*}
\caption{Combination of the $CZ$ gates and single qubit gates acting as the $CNOT$ gate.}
\label{fig:make CNOT}
\end{figure}
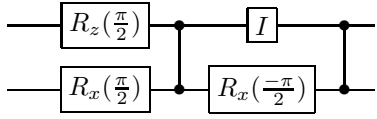

A circuit like brickwork states has a form like Fig. \ref{fig:BFK}.
Using this circuit like brickwork states, servers can know about the structure of the circuit but cannot get any information about the computation.
The user can run any circuit by setting $V$ and $U_i$ in Fig. \ref{fig:BFK} appropriately.

It is known that any single qubit gate can be made from a combination of $H$ gate and $T$ gate\cite{Nielsen-Chuang}.
Specifically, it approximates an arbitrary single qubit gate by rotating of two axes on the Bloch ball, $HTHT$ and $HT^{\dagger}HT^{\dagger}$.
Adding the identity gate $I = HIHI$ to these two sets allows the user to create any circuit by combining it with the circuit like brickwork states.

Next we discuss {\sl dummy gates} to eliminate information about the computation from these gate combinations.
\begin{defn}[Dummy gates]
Dummy gates consist of a combination of the following gates:
\begin{equation}
    D_1 =H T,
\end{equation}
\begin{equation}
    D_2 =H T^{\dagger},
\end{equation}
\begin{equation}
    D_3 =H I.
\end{equation}
For a constant $K$, a gate sequence that is a combination of all $K$ consecutive gates from $\{D_1, D_2, D_3\}$ is called dummy gates.
\end{defn}
As mentioned earlier, an arbitrary single qubit gate consists of $HTHT$, $HT^{\dagger}HT^{\dagger}$, and $HIHI$.
When a single qubit gate consists of $K/2$ combinations of these three, the dummy gates consisting of $K$ consecutive $\{D_1, D_2, D_3\}$ are added except for duplicate gates.
A server who knows nothing about the original single qubit gate will not distinguish between those dummy gates and the original.
By using those dummy gates, even if the server gets the information of $K/2$ consecutive gates from the user, the server cannot distinguish which gate is the original gate by running the dummy gates in parallel.
By definition, dummy gates do not allow the server to obtain any information about the calculation, even if the server knows about a gate combination of less than $K/2$.
Adding dummy gates to the calculation is efficient because it is an operation that adds at most $3^K$ gates when $K$ is a constant.

We show the procedure required to hide the output.
The measurement required during the protocol is not dependent on the calculation, but the final measurement result, which is the output of the calculation, is dependent on the calculation.
Therefore, we randomly perform an $X$ gate or an identity gate at the end of each calculation.
The user accepts the result of the identity gate being executed and flips and accepts the result of the $X$ gate being executed.
In this case, the probability of getting output either ``0'' or ``1'' as the server's measurement result is $50\%$ each, and the actual output of the calculation cannot be known from the measurement result.
Next, we show how to execute the $X$ gate or the identity gate without the server's knowledge.
X-axis rotation on the Bloch ball can be implemented as follows:
\begin{equation}
    R_x(\frac{\pi}{4}) = H\cdot T \cdot H \cdot I.
\end{equation}
If this $R_x(\frac{\pi}{4})$ is applied four times, it becomes the $X$ gate, and if it is applied eight times, it becomes the identity gate.
In other words, when the number of times $R_x(\frac{\pi}{4})$ is executed in a certain gate sequence is divided by 8, the remainder of 4 or 0 changes whether it is the $X$ gate or the identity gate.
Thus, the user can hide the number of $R_x(\frac{\pi}{4})$ executions by bringing the gate sequence so long that it is not known by some servers that perform classical communication after the computation, and execute the $X$ gate or the identity gate without the server know.

Finally, we propose the main protocol that is summarized about the encryption of the circuit.
In the following, we assume that the number of entire servers is $2N$ and that $K$ servers ($N>\lfloor K/2 \rfloor$) do classical communication after the computation.
\begin{defn}[Main protocol]
The basic structure of the protocol is the same as lemma \ref{QMIP(0)}.
However, the two servers in lemma \ref{QMIP(0)} are divided into N servers each, i.e., the whole system will consist of $2N$ servers.
We label each server as $\{A_1,A_2,\cdots,A_N\}$ and $\{B_1,B_2,\cdots,B_N\}$, then $A_i$ receives quantum states from $B_{i-1}$, executes any gate, and passes quantum states to $B_i$.
We also define that the server $B_N$ sends a quantum state to server $A_1$.
This protocol encrypts the circuit in the following process instead of using the user's circuit in lemma \ref{QMIP(0)}.
\begin{description}
\item[Step 1.] The user restructures the circuit for the calculations by using the circuit like brickwork states.
\item[Step 2.] The user decomposes $V$ and $U_i$, which compose the circuit like brickwork states, into $HTHT$, $HT^{\dagger}HT^{\dagger}$, and $HIHI$.
Note that it is necessary to ensure that the number of gates that consist $V$ is constant.
\item[Step 3.] The user adds dummy gates to the circuit so that the dummy gates are run in parallel for the gate sequence of $\lfloor K/2 \rfloor$ gates for the gates consisted in step. 2.
At this step, the number of gates added is at most $3^{K/2}$ of the original number of gates, which is efficient.
\item[Step 4.] Finally, the user executes randomly the $X$ gate or the identity gate using the gate sequence consisting of $4N$ gates that are $HTHI$ or $HIHI$ just before measuring the quantum state corresponding to the output.
The $X$ gate or the identity gate that hides the output also applies to the gate sequence of the dummy gates.
The number of gates added in this step is $4N3^{K/2}n$ for every $n$ input qubits, so it is polynomial increasing and therefore efficient.
\end{description}
The user limits the gate to $\{HT,HT^{\dagger},HI\}$ that each server executes at a time for a gate sequence of $N$ $\{HT,HT^{\dagger},HI\}$.
\end{defn}
We show that the main protocol is a blind protocol even when $K$ servers of the $2N$ servers ($N>\lfloor K/2 \rfloor$) can do classical communication after the computation.
\begin{thm}
\label{main}
The main protocol is a blind protocol even when $K$ servers of the $2N$ servers ($N>\lfloor K/2 \rfloor$) can do classical communication after the computation.
\end{thm}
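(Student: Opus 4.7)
The plan is to verify both conditions of Definition \ref{defn:Blindness} for the main protocol under the stronger adversary model where any set of $K$ servers may pool their classical transcripts after the computation. Theorem \ref{multi} already establishes blindness when the $2N$ servers do not communicate; the new content is to argue that the three encryption layers — the circuit like brickwork states, the dummy gates, and the randomized $X$/identity masking of the output — together ensure that no joint view of $K$ transcripts can depend on the user's input $X$.

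First I would fix an arbitrary subset $S$ of $K$ colluding servers and describe the information $\mathrm{view}(S)$ they can collectively reconstruct. Because each server executes a single gate at a time drawn from $\{HT, HT^{\dagger}, HI\}$ and then hands the qubit on to the next server in the rotating $A_1, B_1, A_2, B_2, \ldots$ schedule, the joint transcript of $S$ is nothing more than at most $K$ gate labels, each tagged with a wire and column position inside the brickwork grid. The structural part (wire index, column index, which slot of the brickwork template a label occupies) is input-independent by definition of the circuit like brickwork states, so only the gate labels themselves can carry information about $X$.

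Next I would handle those labels in three disjoint regimes. (i) Labels belonging to an arbitrary unitary $V$ or a $U_i$ block: the dummy-gate construction places in parallel every length-$K$ word over $\{D_1, D_2, D_3\}$, so the real gate sequence is hidden among $3^K$ equally valid candidates, and any fewer than $\lfloor K/2 \rfloor$ consecutive labels observed by $S$ are compatible with every possible choice of $V$ or $U_i$. (ii) Labels inside the $CZ$-gadgets of Figs.\ \ref{fig:make identity}--\ref{fig:make CNOT}: these reduce to case (i) because each $U_i$ is itself a length-four brickwork slot built from the same three atomic words. (iii) Labels inside the final $4N$-long $X$-or-identity masking sequence: since $S$ contains at most $K$ servers and the hypothesis $N>\lfloor K/2\rfloor$ forces $K<4N$, the number of $R_x(\pi/4)$ rotations modulo $8$ is not determined by $\mathrm{view}(S)$, so the distinction between $X$ and $I$, and hence the output bit, remains hidden.

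Putting these observations together, the distribution of $\mathrm{view}(S)$ over the protocol randomness is identical for any two inputs of the same size, giving condition~1 of Definition \ref{defn:Blindness}. Condition~2 then follows because, conditional on $\mathrm{view}(S)$, the quantum states the colluding servers handle are determined by a fixed unitary circuit of fixed size applied to a fixed shared entangled resource as in Lemma \ref{QMIP(0)}; any residual dependence on $X$ in the post-measurement quantum state would propagate to the classical transcript, contradicting condition~1. The main obstacle I expect is pinning down regime (i) rigorously: one must show that for every computation the user wishes to run and every length-$\lfloor K/2 \rfloor$ window of labels visible to $S$, there exists an assignment of dummy words to the unobserved tracks consistent with \emph{any} other computation of the same size. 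This reduces to a counting argument about the $3^{\lfloor K/2\rfloor}$ parallel dummy tracks covering the space of possible subsequences, and it is where the assumption $N>\lfloor K/2\rfloor$ is used crucially to guarantee that no colluding subset spans an entire gate block.
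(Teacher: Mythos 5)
Your proposal follows essentially the same route as the paper's proof: the brickwork-state structure hides the circuit topology, the parallel dummy gates hide any window of at most $\lfloor K/2\rfloor$ gate labels, the random $X$/identity padding of length $4N$ hides the output because the colluders see only $4\lfloor K/2\rfloor<4N$ of those gates, and the quantum side is handled by appeal to Lemma \ref{QMIP(0)}. The only notable divergences are that the paper justifies condition 2 by a simulation argument (any input-dependent quantum state obtainable by the split servers would also be obtainable by the original two servers, contradicting Lemma \ref{QMIP(0)}) rather than your weaker claim that quantum dependence on $X$ would ``propagate to the classical transcript,'' and that the paper also explicitly notes two components of the servers' view that your description omits, namely the input (reduced to $\ket{0}$ plus gates and hidden like any other gate) and the intermediate measurement outcomes.
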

\begin{proof}
Since we assume that all servers do not perform classical communication during the computation, from lemma \ref{QMIP(0)} and theorem \ref{multi}, the servers cannot get information about the user's computation.

Next, we consider what happens after the server finishes the computation delegated by the user and sends the output to the user.
The assumption is that the $K$ servers can perform classical communication after the user's computation is completed.
It is shown from lemma \ref{QMIP(0)} and theorem \ref{multi} that servers that do not perform classical communication with other servers after the computation is finished do not get information that depends on the user's computation.
Then, we describe the servers that do the classical communication with other servers.
Since servers can do classical communication with each other, the protocol that users use to prevent servers from doing things differently from the user's instructions are no longer relevant to the server, and the server can directly get information about the user's circuit.

Even if the server can obtain information about the user's circuit, we show that it cannot get information that depends on the user's computation.
Since the circuit is built using the structure of the circuit like brickwork states, the server is unable to get information about the computation from the circuit structure. 
The server also gets information about the circuit's consecutive gates at most $\lfloor K/2 \rfloor$, but the circuit uses dummy gates parallel with the original gates.
Since the server cannot distinguish between the dummy gates and the original gates because it does not know the original user's calculations, the gate information that the server obtained is all the possible gate combinations it could get.
Hence, the server cannot get information that depends on the user's calculations from the gate information.
The input can be decomposed into $\ket{0}$ and gates without loss of generality; therefore, it can be hidden just like the gate.
Since the measurement results during the computation process do not depend on the user's original calculation, the server cannot obtain information that depends on the user's calculation from them.
The server also gets some of the measurement results that correspond to the output of the calculation.
However, recall that the user encrypted this output using the $X$ gate or the identity gate randomly.
The server cannot distinguish between the $X$ gate and the identity gate without knowing all of the gates in the last $4N$ gate sequences.
By the assumption, the server know only about $4\lfloor K/2 \rfloor$ gates out of $4N$, the server cannot know about the gates that encrypt the output.
Therefore, the server's output is half ``0'' and half ``1'', and the server cannot decrypt it, so the output does not depend on the user's calculations.
The above result holds that even if the server performing the classical communication is less than $K$.
The classical information available to the server does not depend on the user's original calculation.

The server's quantum state is identical to the two servers protocol.
If those servers can get a quantum state that depends on the calculation, servers can also get the quantum at the time of the two servers.
This contradicts lemma \ref{QMIP(0)}.
Therefore, the quantum state obtained by the server is independent of the user's original calculation.

Hence, the main protocol is a blind protocol even if $K$ servers of the $2N$ servers ($N>\lfloor K/2 \rfloor$) can do classical communication after the computation.
\end{proof}

\subsection{Risk Estimation}
Theorem \ref{main} is based on the premise that after the computation, only $K$ servers of the $2N$ servers ($N>\lfloor K/2 \rfloor$) perform classical communication after the computation.
In reality, we can assume that a user has made a contract with all servers not to do classical communication with each other, but some of them have done so in violation of the contract after the computation.
Assume that $t$ is the average time between one server leaking information and the next, whether consciously or unconsciously.
If a user chooses a sufficiently large $K$, the law of large numbers allows the user to estimate that the time it takes for the server to get the average user's information is $(K+1)t$.
Of course, if $(K+1)t$ is too long, the value of $t$ may become obsolete.
Furthermore, although t is considered fixed here, it is not necessarily fixed, and in practice, accurate model design for $t$ is necessary.
However, the main protocol allows the user to choose the parameter $K$, allowing him to adopt the risk of time leaking information that depends on the parameter $K$ rather than the risk of time leaking information by a single server.
Thus, users can estimate the risk that the previous study could not by using the main protocol, and they can choose the number of servers according to the risk they are willing to accept.

\section{Discussion}
In this paper, we proposed a blind protocol with multiple servers that is effective even if some servers perform classical communication after the computation.
This main protocol is an extension of the blind protocol with two servers proposed in the previous study\cite{Reichardt}.
We first increased the number of servers by splitting the role into multiple servers inside the server for the two servers used in the protocol of the previous study.
Next, we proposed a method of encrypting the circuit such that even if some circuit information is leaked, the server will not be able to determine which information is dependent on the user's original calculation.
We then proposed the main protocol, which summarized them and showed a blind protocol even in assumption.
Finally, we discussed how risk could be estimated, and the number of servers increased so that users can manage their own acceptable risk.

One of the disadvantages of our protocol is that it uses many more gates in the calculation than the protocol used in the previous study.
The amount it increases depends on how much risk the user is willing to accept.
However, the increase in the number of gates fits into the polynomial size.
Another disadvantage is that to use $N$ servers, $N$ quantum cloud servers sharing entanglements should exist in reality.

Another concern is that it cannot manage the situation where all servers perform classical communication after the computation or perform classical communication during the computation.
The former is especially significant. 
If the former can be solved, the blind protocol on a single server will be as secure as the blind protocol on a single server merely by not having servers perform classical communication with each other during computation.
It is a significant open problem to find a blind protocol that prevents servers from getting information dependent on the user's original calculation, even if it allows all servers to perform classical communication after the computation.

\section*{Acknowledgment}
We would like to thank Takayuki Miyadera for many helpful comments for the paper and advice to the protocol.
This preprint has not undergone peer review or any post-submission improvements or corrections. The Version of Record of thisarticle is published in Quantum Information Processing, and is available online at https://doi.org/10.1007/s11128-022-03430-y.

\end{document}